\DeclareUrlCommand\path{\urlstyle{same}}
\newtheorem{theorem}{Theorem}[section]
\newaliascnt{lemma}{theorem} \newtheorem{lemma}[lemma]{Lemma}
\newaliascnt{corollary}{theorem}
\def\emph#1{\textit{\textbf{\boldmath #1}}}
\title{Acutely Triangulated, Stacked, and Very Ununfoldable Polyhedra}
\author{%
  Erik D. Demaine%
    \thanks{Computer Science and Artificial Intelligence Laboratory,
      Massachusetts Institute of Technology,
      \protect\url{{edemaine,mdemaine}@mit.edu}}
\and
  Martin L. Demaine\footnotemark[1]
\and
  David Eppstein\thanks{Computer Science Department,
        University of California, Irvine, \protect\url{eppstein@uci.edu}. This work was supported in part by the US National Science Foundation under grant CCF-1616248.}
}
\begin{document}
\thispagestyle{empty}
\maketitle

\begin{abstract}
We present new examples of topologically convex edge-ununfoldable polyhedra,
i.e., polyhedra that are combinatorially equivalent to convex polyhedra,
yet cannot be cut along their edges and unfolded into one planar piece
without overlap.
One family of examples is \emph{acutely triangulated},
i.e., every face is an acute triangle.
Another family of examples is \emph{stacked},
i.e., the result of face-to-face gluings of tetrahedra.
Both families achieve another natural property,
which we call \emph{very ununfoldable}:
for every $k$, there is an example such that every nonoverlapping
multipiece edge unfolding has at least $k$ pieces.

\end{abstract}

\section{Introduction}

Can every convex polyhedron be cut along its edges and unfolded into a
single planar piece without overlap?
Such \emph{edge unfoldings} or \emph{nets} are useful for constructing 3D
models of a polyhedron (from paper or other material such as sheet metal):
cut out the net, fold along the polyhedron's uncut edges,
and re-attach the polyhedron's cut edges \cite{Wen-PM-71}.
Unfoldings have also proved useful in computational geometry algorithms
for finding shortest paths on the surface of polyhedra
\cite{AgaAroORo-SICOMP-97,AroORo-DCG-92,CheHan-SoCG-90}.

Edge unfoldings were first described in the early 16th century by
Albrecht D\"urer \cite{Duerer-1525}, implicitly raising the still-open
question of whether every convex polyhedron has one
(sometimes called D\"urer's conjecture).
The question was first formally stated in 1975 by G. C. Shephard,
although without reference to D\"urer~\cite{Fri-HFM-18,She-MPCPS-75}.
It has been heavily studied since then, with progress of two types
\cite{Demaine-O'Rourke-2007,ORo-19}:
\begin{enumerate}
\item finding restricted classes of polyhedra, or generalized types of unfoldings, for which the existence of an unfolding can be guaranteed; and
\item finding generalized classes of polyhedra, or restricted types of unfoldings, for which counterexamples --- \emph{ununfoldable polyhedra} --- can be shown to exist.
\end{enumerate}

Results guaranteeing the existence of an unfolding include:
\begin{itemize}
\item Every pyramid, prism, prismoid, and dome has an edge unfolding
  \cite{Demaine-O'Rourke-2007}.
\item Every sufficiently flat acutely triangulated convex terrain has an edge unfolding~\cite{caps}. Consequentially, every acutely triangulated convex polyhedron can be unfolded into a number of planar pieces that is bounded in terms of the ``acuteness gap'' of the polyhedron, the minimum distance of its angles from a right angle.
\item Every convex polyhedron has an affine transformation
  that admits an edge unfolding \cite{Ghomi-2014}.
\item Every convex polyhedron can be unfolded to a single planar piece by cuts interior to its faces
\cite{AgaAroORo-SICOMP-97,SunUnfolding_EGC2011f}.
\item Every polyhedron with axis-parallel sides can be unfolded after a linear number of axis-parallel cuts through its faces
\cite{Genus2Unfolding_GC}.
\item Every triangulated surface (regardless of genus) has a ``vertex unfolding'', a planar layout of triangles connected through their vertices that can be folded into the given surface~\cite{DemEppEri-DG-03}.
\item For ideal polyhedra in hyperbolic space, unlike Euclidean convex polyhedra or non-ideal hyperbolic polyhedra, every spanning tree forms the system of cuts of a convex unfolding into the hyperbolic plane.
\end{itemize}

Previous constructions of ununfoldable polyhedra include the following results.
A polyhedron is \emph{topologically convex} if it is combinatorially
equivalent to a convex polyhedron, meaning that its surface is a topological
sphere and its graph is a 3-vertex-connected planar graph.
\begin{itemize}
\item Some orthogonal polyhedra and topologically convex orthogonal polyhedra have no edge unfolding, and it is NP-complete to determine whether an edge unfolding exists in this case~\cite{BieDemDem-CCCG-98,AbeDem-CCCG-11}.
\item There exists a convex-face star-shaped topologically convex polyhedron with no edge unfolding~\cite{Tar-UMN-99,Gruenbaum-2002-net}.
\item There exists a triangular-face topologically convex polyhedron with no edge unfolding~\cite{BerDemEpp-CGTA-03}.
\item There exist edge-ununfoldable topologically convex polyhedra with as few as 7 vertices and 6 faces, or 6 vertices and 7 faces~\cite{AkiDemEpp-JCDCG-19}.
\item There exists a topologically convex polyhedron that does not even have a vertex unfolding~\cite{AbeDemDem-CCCG-11}.
\item There exist domes that have no \emph{Hamiltonian unfolding}, in which the cuts form a Hamiltonian path through the graph of the polyhedron~\cite{DemDemUeh-CCCG-13}. Similarly, there exist polycubes that have no Hamiltonian unfolding~\cite{DemDemEpp-19}.
\item There exists a convex polyhedron, equipped with 3-vertex-connected planar graph of geodesics partitioning the surface into regions metrically equivalent to convex polygons, that cannot be cut and unfolded along graph edges~\cite{BarGho-DCG-19}.
\end{itemize}


In this paper, we consider two questions left open by the previous work on
edge-ununfoldable polyhedra with triangular faces~\cite{BerDemEpp-CGTA-03},
and strongly motivated by O'Rourke's recent results on unfoldings of acutely-triangulated polyhedra~\cite{caps}.
First, the previous counterexample of this type involved triangles with highly
obtuse angles. Is this a necessary feature of the construction, or does there
exist an ununfoldable polyhedron with triangular faces that are all acute?
Second, how far from being unfoldable can these examples be? Is it possible to
cut the surfaces of these polyhedra into a bounded number of planar pieces
(instead of a single piece) that can be folded and glued to form the
polyhedral surface?
(Both questions are motivated by previously posed analogous questions for
convex polyhedra, as easier versions of D\"urer's conjecture
\cite[Open Problems 22.12 and 22.17]{Demaine-O'Rourke-2007}.)

We answer both of these questions negatively, by finding families of
topologically convex edge-ununfoldable polyhedra with all faces acute
triangles, in which any cutting of the surface into regions that can be
unfolded to planar pieces must use an arbitrarily large number of pieces.
Additionally, we use a similar construction to prove that there exist
edge-ununfoldable \emph{stacked polyhedra} \cite{Gru-Geomb-01b},
formed by gluing tetrahedra face-to-face with the gluing pattern of a tree,
that also require an arbitrarily large number of pieces to unfold.
We leave open the question of whether there exists an edge-ununfoldable
stacked polyhedron with acute-triangle faces.

\section{Hats}

\begin{figure}[t]
\includegraphics[width=\columnwidth]{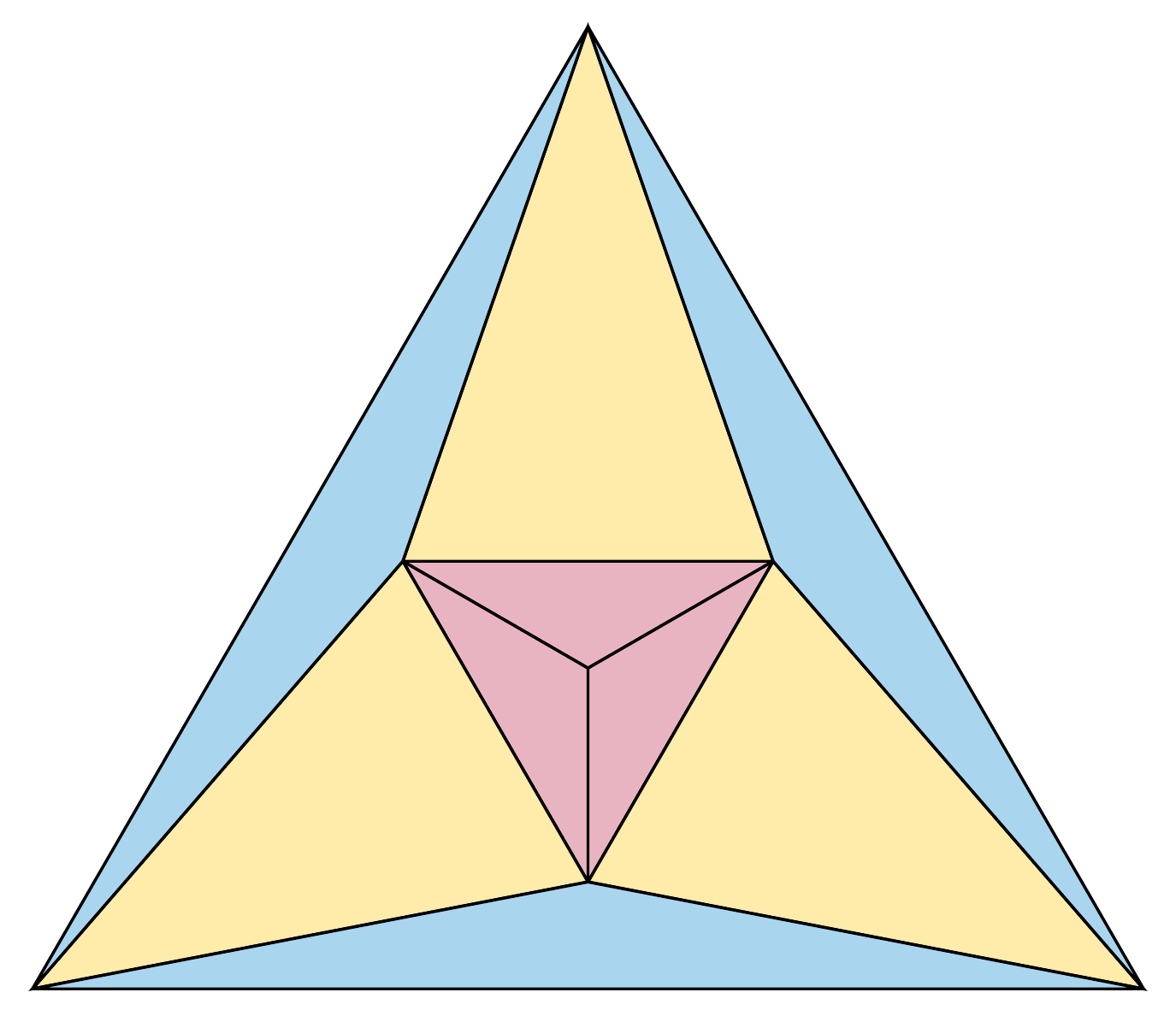}
\caption{Combinatorial structure of a hat}
\label{fig:hat}
\end{figure}

Our construction follows that of Bern et al.~\cite{BerDemEpp-CGTA-03} in being based on certain triangulated topological disks, which they called \emph{hats}. The combinatorial structure of a hat (in top view, but with different face angles than the hat we use in our proof) is shown in \autoref{fig:hat}: It consists of nine triangles, three of which (the \emph{brim}, blue in the figure) have one edge on the outer boundary of the disk. The next three triangles, yellow in the figure, have a vertex but not an edge on the disk boundary; we call these the \emph{band} of the hat. The central three triangles, pink in the figure, are disjoint from the boundary and meet at a central vertex; we call these the \emph{crown} of the hat.

In both the construction of Bern et al.~\cite{BerDemEpp-CGTA-03} and in our construction, the three vertices of the hat that are interior to the disk but not at the center all have negative curvature, meaning that the sum of the angles of the faces meeting at these vertices is greater than $2\pi$. The center vertex, on the other hand, has positive curvature, a sum of angles less than $2\pi$. When this happens, we can apply the following lemmas:

\begin{lemma}
\label{lem:cut-2-neg}
At any negatively-curved vertex of a polyhedron, any unfolding of the polyhedron that cuts only its edges and separates its surface into one or more simple polygons must cut at least two edges at each negatively-curved vertex.
\end{lemma}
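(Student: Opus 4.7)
The plan is to fix a negatively-curved vertex $v$, let $\Sigma_v>2\pi$ denote the sum of the face angles at $v$, and let $c$ denote the number of cut edges incident to $v$. I will argue that $c\ge 2$ by ruling out the cases $c=0$ and $c=1$, using only the local picture around $v$ in the unfolding.

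First I would set up the local combinatorics: the faces incident to $v$ come in a cyclic order around $v$, and the cut edges split this cycle into $c$ contiguous ``fans'' (or leave it as one closed cycle when $c=0$). Because the unfolding assembles each piece by rigid motions across uncut edges, each fan is carried isometrically to a connected planar region incident to the image of $v$ in a single piece of the unfolding, with total angular extent at $v$ equal to the sum of the face angles in that fan. In particular, the images of the fans together account for exactly the $\Sigma_v$ total angle at $v$, distributed among the $c$ pieces touching $v$ (or all concentrated in one piece if $c=0$).

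Now I handle the two bad cases. If $c=0$, the entire cycle of faces around $v$ lies in one piece and $v$ is an interior point of that simple polygon; but the total angle around any interior point of a planar simple polygon is exactly $2\pi$, contradicting $\Sigma_v>2\pi$. If $c=1$, the single fan at $v$ lies in one piece with $v$ on its boundary, and the interior angle of that simple polygon at $v$ equals $\Sigma_v>2\pi$; but the interior angle at a vertex of a simple polygon in the plane is strictly less than $2\pi$, since otherwise the two boundary edges incident to $v$ would coincide or cross and the boundary would fail to be a simple closed curve. Both contradictions force $c\ge 2$.

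The main obstacle is the bookkeeping in the second paragraph: justifying that each fan really does unfold to a single non-self-overlapping angular sector at the image of $v$ within one piece, so that the fan angles are literally the interior angles of the pieces at $v$. This follows from the fact that the unfolding is a local isometry on each piece and that the cut edges at $v$ are exactly the places where adjacent faces around $v$ get separated into different pieces; once this identification is made, the comparison with $2\pi$ is immediate and no global argument about the unfolding is needed.
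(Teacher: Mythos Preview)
Your proposal is correct and takes essentially the same approach as the paper. The paper's proof is a single sentence---``If only one edge were cut then the faces surrounding that vertex could not unfold into the plane without overlap''---and your argument is a careful unpacking of exactly this local angle-count, additionally making the $c=0$ case explicit.
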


\begin{proof}
If only one edge were cut then the faces surrounding that vertex could not unfold into the plane without overlap.
\end{proof}

\begin{lemma}
\label{lem:tree-1-boundary}
Let $D$ be a subset of the faces of a polyhedron, such that the polyhedron is topologically a sphere and $D$ is topologically equivalent to a disk (such as a hat).
Then in any unfolding of the polyhedron (possible cutting it into multiple pieces), either $D$ is separated into multiple pieces by a path of cut edges from one boundary vertex of $D$ to another or by a cycle of cut edges within $D$, or the set of cut edges within $D$ forms a forest with at most one boundary vertex for each tree in the forest.
\end{lemma}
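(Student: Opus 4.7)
The plan is a straightforward case analysis on the subgraph $C$ of $D$'s 1-skeleton formed by the cut edges within $D$, using only the fact that $D$ is a topological disk. The central observation is that in a disk, both a simple closed curve and a simple arc joining two points of the boundary separate the disk into two regions, and any collection of cut edges containing such a curve must therefore split $D$ into multiple unfolded pieces.

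First I would dispose of the case where $C$ contains a cycle $\gamma$. Since $\gamma$ is a simple closed curve embedded in the disk $D$, the Jordan curve theorem separates the interior of $D$ into two regions, and because every edge of $\gamma$ is cut, the faces of $D$ on opposite sides of $\gamma$ cannot remain in the same piece of the unfolding. This realizes the second alternative of the lemma. So from now on I may assume $C$ is acyclic, i.e., a forest.

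Next I would consider whether some tree component $T$ of $C$ contains two distinct boundary vertices $u, v \in \partial D$. If so, let $P$ be the unique $u$-to-$v$ path in $T$; then $P$ is a simple arc in the closed disk $D$ with both endpoints on $\partial D$. Such an arc separates $D$ into two regions (formally, by closing $P$ with an arc along $\partial D$ into a simple closed curve and applying the Jordan curve theorem), and since every edge of $P$ is cut, this yields the first alternative. Otherwise, each tree of the forest $C$ contains at most one boundary vertex of $D$, which is exactly the third alternative.

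The only technical concern is justifying the two topological separation claims; once those are in hand, the logical structure is essentially forced, since ``not a cycle'' means forest and ``no two boundary vertices in a single tree'' means at most one per tree. I do not expect any serious obstacle beyond stating the Jordan-style arguments cleanly for a 1-skeleton embedded in a topological (rather than geometric) disk.
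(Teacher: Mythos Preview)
Your proposal is correct and follows essentially the same argument as the paper: a case split into ``contains a cycle,'' ``forest with a tree touching two boundary vertices,'' and ``otherwise,'' with the Jordan Curve Theorem invoked in the first two cases. Your version is slightly more explicit (closing the boundary-to-boundary path with an arc of $\partial D$), but the structure and key ideas are identical.
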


\begin{proof}
If the cut edges within $D$ do not form a forest, they contain a cycle and the Jordan Curve Theorem implies that this cycle separates an interior part of the boundary from the exterior. If they form a forest in which some tree contains two boundary vertices, then they contain a boundary-to-boundary path within $D$, again separating $D$ by the Jordan Curve Theorem. The only remaining possibility is a forest with at most one boundary vertex per tree.
\end{proof}

\begin{lemma}
\label{lem:hat-path}
For a hat combinatorially equivalent to the one in \autoref{fig:hat}, with positive curvature at the center vertex and negative curvature at the other three interior vertices, any unfolding that does not cut the hat into multiple pieces must cut a set of edges along a single path from a boundary vertex to the center vertex.
\end{lemma}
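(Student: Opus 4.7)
The plan is to apply the two preceding lemmas to the subgraph $F$ whose edges are the cut edges lying inside the hat, and then read off the path structure from a short leaf count. The hypothesis that the hat is not split into multiple pieces lets us invoke \autoref{lem:tree-1-boundary} with $D$ equal to the hat: $F$ is a forest in which every tree contains at most one boundary vertex of the hat. Meanwhile \autoref{lem:cut-2-neg}, applied at each of the three negatively curved interior vertices of the hat, tells us that each of them has degree at least two in $F$.

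The key step is a classification of leaves. A leaf of $F$ has degree one, so no negatively curved vertex of the hat can be a leaf; the only possible leaves are therefore the boundary vertices of the hat and the single central vertex. Since each tree of $F$ contains at most one boundary vertex and the center belongs to at most one tree, every tree of $F$ has at most two leaves. Combined with the classical fact that every tree with at least one edge has at least two leaves, this forces each non-trivial tree of $F$ to have exactly two leaves --- one boundary vertex of the hat and the center --- and hence to be a path between them. The center is unique, so at most one tree of $F$ is non-trivial; and the three negatively curved vertices, having degree at least two, cannot be isolated, so they all lie in this single non-trivial tree as internal vertices of degree exactly two. Thus $F$ is precisely one path from a boundary vertex of the hat to the center, which is the claim.

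The step that requires the most care is the leaf classification, together with the convention implicit in \autoref{lem:tree-1-boundary} that $F$ consists only of cut edges strictly inside the hat: the degrees I count are the degrees with respect to these edges alone. Once that bookkeeping is pinned down, the rest of the argument is one line of combinatorics on trees, and I do not anticipate any substantive obstacle.
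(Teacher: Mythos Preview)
Your argument is correct and follows essentially the same route as the paper's proof: invoke \autoref{lem:tree-1-boundary} to get a forest with at most one boundary vertex per tree, use \autoref{lem:cut-2-neg} to rule out the negatively curved vertices as leaves, and then count leaves to force a single path from the boundary to the center. You spell out a few details the paper leaves implicit (e.g., that only one tree can be non-trivial because the center is the only available non-boundary leaf, and that the three negatively curved vertices must sit on this path as interior vertices), but the underlying idea is identical.
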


\begin{proof}
By \autoref{lem:tree-1-boundary}, each component of cut edges must form a tree with at most one boundary vertex within the hat. But every tree with one or more edges has at least two leaves, and every tree that is not a path has at least three leaves. By \autoref{lem:cut-2-neg}, the only non-boundary leaf can be the center vertex, so each component must be a path from the boundary to this vertex.
\end{proof}

Up to symmetries of the hat, there are only two distinct shapes that the path of \autoref{lem:hat-path} from the boundary to the center of a hat can have (\autoref{fig:hatpaths}). These two cuttings differ in how the crown triangles are attached to the band and to each other, but they both cut the brim and band triangles in the same way, into a strip of triangles connected edge-to-edge around the boundary of the hat.

\begin{figure}[t]
\includegraphics[width=\columnwidth]{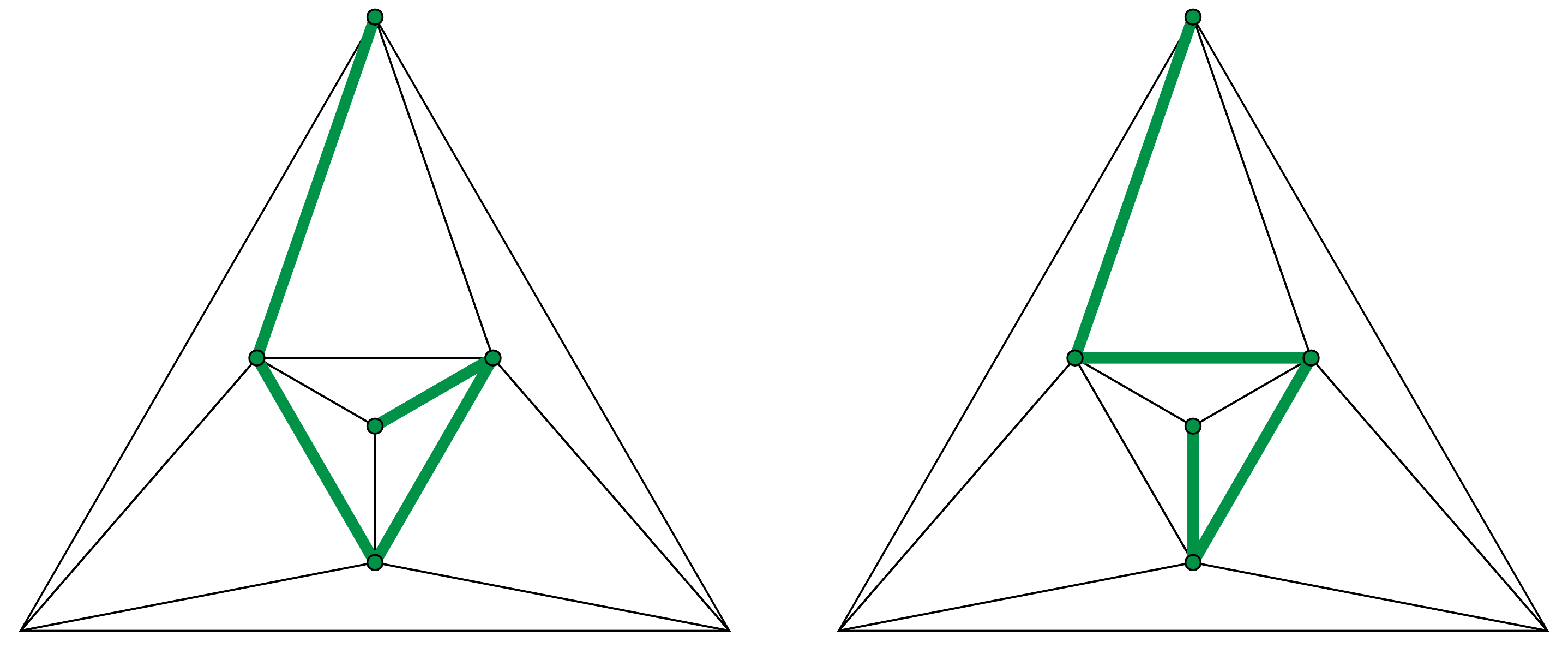}
\caption{Two paths from a boundary vertex of a hat, through all three negatively curved vertices, to the center vertex}
\label{fig:hatpaths}
\end{figure}

Our key new construction is depicted in unfolded (but self-overlapping) form in \autoref{fig:spiked-hat}. It is a hat in which all triangles are acute and isosceles:
\begin{itemize}
\item The three brim triangles have apex angle $85^\circ$ and base angle $47.5^\circ$.
\item The three band triangles have base angle $85^\circ$ and apex angle $10^\circ$.
\item The three crown triangles are congruent to the band triangles, with base angle $85^\circ$ and apex angle $10^\circ$.
\end{itemize}
As in the construction of Bern et al.~\cite{BerDemEpp-CGTA-03}, this leaves negative curvature (total angle $425^\circ$ from five $85^\circ$ angles) at the three non-central interior angles of the hat, and positive curvature (total angle $30^\circ$) at the center vertex, allowing the lemmas above to apply. The cut edges of the figure form a tree with a degree-three vertex at one of the negatively curved vertices of the hat, and a leaf at another negatively curved vertex, the one at which the self-overlap of the figure occurs, So the cutting in the figure does not match in detail either of the two path cuttings of \autoref{fig:hatpaths}. Nevertheless, the brim and band triangles are unfolded as they would be for either of these two path cuttings. It is evident from the figure that this unfolding of the brim and band triangles cannot be extended to a one-piece unfolding of the entire hat: if a crown triangle is attached to the middle of the three unfolded band triangles (as it is in the figure) then there is no room on either side of it to attach the other two crown triangles, and a crown triangle attached to either of the other two band triangles would overlap the opposite band triangle. We prove this visual observation more formally below.

\begin{figure}[t]
\includegraphics[width=\columnwidth]{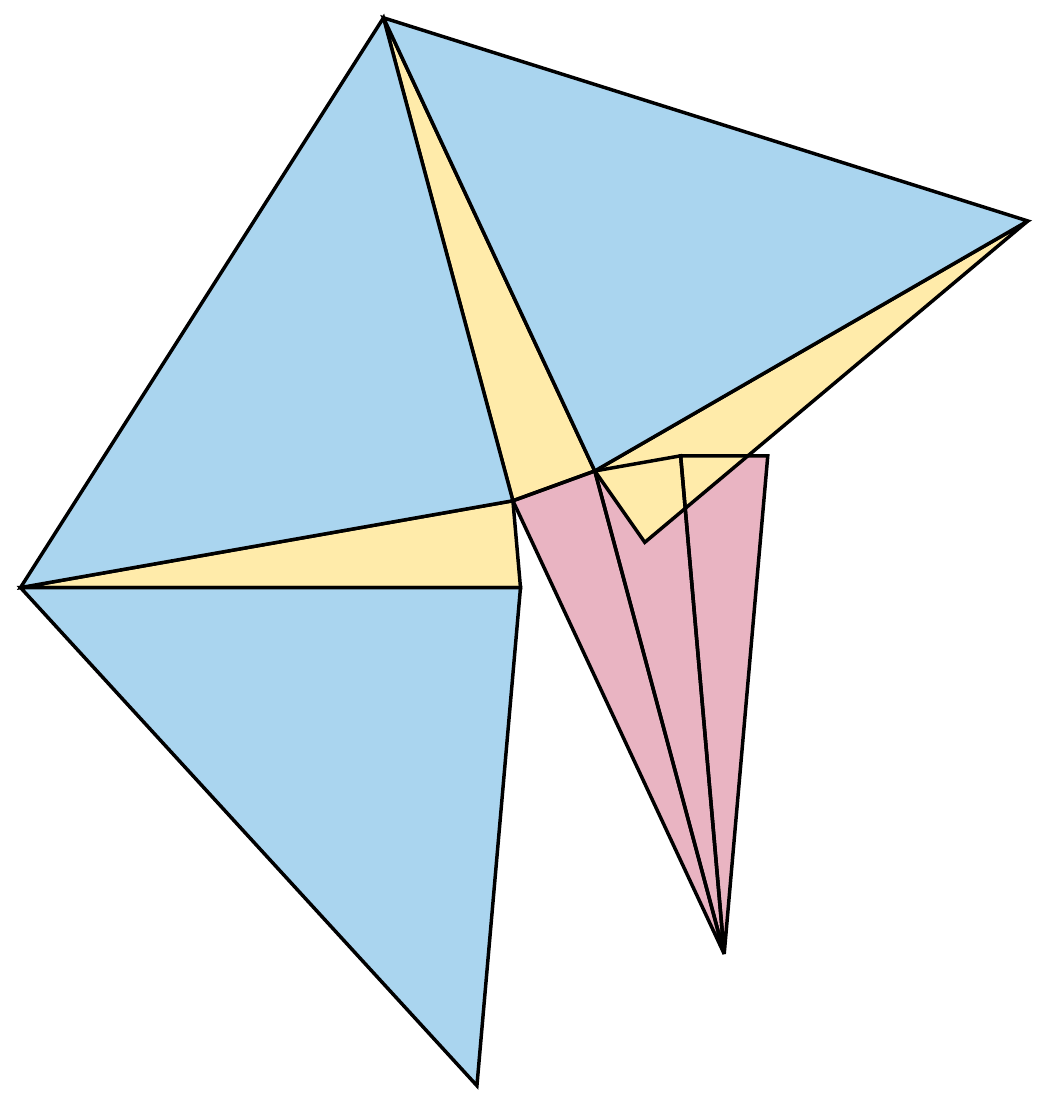}
\caption{A hat made with acute isosceles triangles. Unlike \autoref{fig:hatpaths}, the cuts made to form the self-overlapping unfolding shown do not form a path.}
\label{fig:spiked-hat}
\end{figure}

\begin{lemma}
\label{lem:hat}
The hat with acute triangles described above has no single-piece unfolding.
\end{lemma}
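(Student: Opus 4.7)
The plan is to combine \autoref{lem:hat-path} with an explicit coordinate computation. By \autoref{lem:hat-path}, any single-piece unfolding of the hat must cut a set of edges forming a single path from a boundary vertex to the center vertex, passing through each of the three negatively-curved interior vertices as an internal degree-two vertex. Up to the $D_3$ symmetry of the hat, exactly two combinatorial shapes of such a path occur (the two depicted in \autoref{fig:hatpaths}), distinguished by the cyclic order in which they visit the three negatively-curved vertices.

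In both cases, only the first edge of the path (from the boundary vertex to its adjacent negatively-curved vertex) lies inside the brim-and-band subcomplex, so the six brim and band triangles remain connected as a common strip joined edge-to-edge around the hat boundary. The shape of this unfolded strip is determined by the $47.5^\circ$--$47.5^\circ$--$85^\circ$ brim angles and the $85^\circ$--$85^\circ$--$10^\circ$ band angles (whose leg length matches the brim side length), and I would compute it explicitly first. The three crown triangles remain connected as a chain joined through two of the three edges meeting at the center vertex, attached to the strip by a single uncut band-to-crown edge; the two path cases differ only in which of the three bands receives this attachment---at the end of the strip in one case, and at an interior position in the other.

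It then remains to verify by coordinate calculation that in each case the unfolded crown chain overlaps the strip. In the first case, I would show that the center vertex at the end of the crown chain is carried to a point strictly inside one of the brim triangles of the strip, so that the crown triangle containing the center vertex overlaps that brim. In the second case, the crown chain wraps around so that the leg from a negatively-curved vertex to the center vertex in one of the crown triangles crosses the exposed cut edge of a band on the opposite side of the strip, forcing the corresponding crown triangle and band to overlap. The main obstacle is this numerical step, which depends on the specific angle choice: the angle sum $5 \times 85^\circ = 425^\circ$ at each negatively-curved vertex leaves only narrow angular gaps (of $20^\circ$ or $105^\circ$) around each of its unfolded copies into which the crown could ``escape'' outside the strip, and these gaps turn out to be too narrow to accommodate the full crown chain without overlap in either case.
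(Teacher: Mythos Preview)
Your overall strategy matches the paper's: invoke \autoref{lem:hat-path} to reduce to two cut-path cases, observe that the brim and band unfold into a fixed strip, and then argue that the crown chain must overlap this strip. Where you diverge is in how you handle the two cases and how you certify the overlap.

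The paper notices a unifying symmetry that you miss. The three band triangles contribute a chain of three equal base edges (the band--crown edges), meeting at interior angles of $105^\circ$ (since the brim apex and two band base angles at each shared vertex $N_i$ total $3\times 85^\circ = 255^\circ$). The key observation is that \emph{both} cut-paths of \autoref{fig:hatpaths} leave the crown attached to a band whose base edge sits at an \emph{end} of this three-edge chain, never at the middle. Your framing (``end of the strip'' versus ``interior position'') refers to position within the six-triangle brim--band strip; that description is not wrong, but it obscures the point: the band at position~2 of the six-triangle strip is still at an end of the three-base-edge chain. Once one sees this, the two cases collapse into one symmetric situation, and there is no need for separate overlap arguments.

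The paper then avoids coordinate computation entirely by comparing to a regular pentagon. Since $105^\circ < 108^\circ$, the base edge at one end of the chain crosses the perpendicular bisector of the base edge at the other end, and does so nearer than the opposite pentagon vertex would; and since the crown triangle's $10^\circ$ apex makes it taller than the isosceles triangle inscribed in a regular pentagon, the attached crown triangle's apex extends along that perpendicular bisector past the crossing point and into the band triangle at the far end. This single inequality replaces both of your proposed case-by-case computations. Your coordinate approach would certainly succeed if carried out, but your specific overlap claims (the center vertex landing inside a brim triangle; a crown leg crossing a band's cut edge) differ from the paper's and would each need verification, whereas the pentagon comparison gives a clean argument with a visible margin.
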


\begin{proof}
As we have already seen in \autoref{lem:hat-path}, any unfolding (if it exists) must be along one of the two cut paths depicted in \autoref{fig:hatpaths}. As a result, the unfolding of the brim and band triangles (but not the crown triangles) must be as depicted in \autoref{fig:spiked-hat}. In this unfolding, the three base sides of the unfolded band triangles form a polygonal chain whose interior angles (surrounding the central region of the figure where the pink crown triangles are attached) can be calculated as $105^\circ$.

\begin{figure}[t]
\centering
\includegraphics[width=0.6\columnwidth]{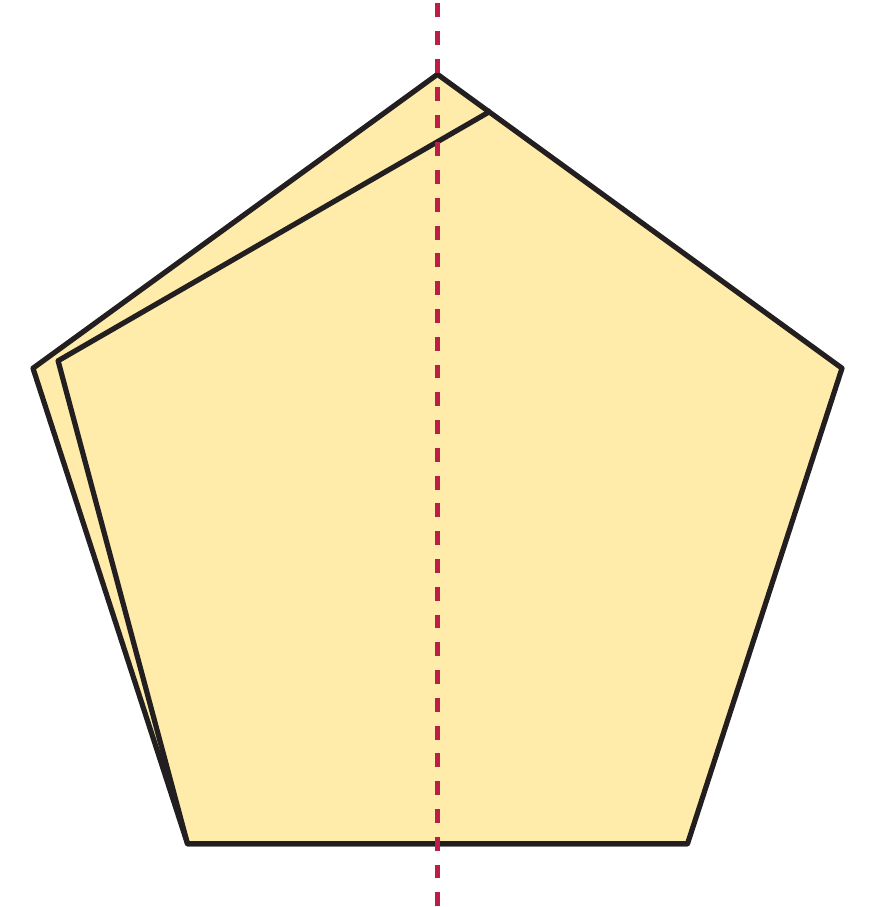}
\caption{Each vertex of a regular pentagon lies on the perpendicular bisector of the opposite side; in a path of three equal edges with the tighter angle $105^\circ$, the last edge overlaps the perpendicular bisector of the first.}
\label{fig:midline}
\end{figure}

A regular pentagon has interior angles of $108^\circ$, and has the property that each vertex lies on the perpendicular bisector of the opposite edge. Because the interior angles of the chain of base sides of band triangles are $105^\circ$, less than this $108^\circ$ angle, it follows that the band triangle at one end of the chain extends across the perpendicular bisector of the base edge at the other end of the chain. Further, it does so at a point closer than the vertex of a regular pentagon sharing this same base edge (\autoref{fig:midline}).

If a crown triangle were attached to one of the two base edges at the ends of the chain of three base edges, its altitude would lie along the perpendicular bisector of the base edge. And because the crown triangle has an apex angle of $10^\circ$, sharper than the angle of an isosceles triangle inscribed within a regular pentagon, its altitude extends across the perpendicular bisector farther than the regular pentagon vertex, causing it to overlap with the band triangle at the other end of the chain of three base edges.

Therefore, attaching a crown triangle to either the first or last of the band triangle base edges in the chain of these three edges necessarily leads to a self-overlapping unfolding. However, these two ways of attaching a crown triangle are the only ones permitted by the two cases depicted in \autoref{fig:hatpaths}. Attaching a crown triangle to the middle of the three base edges, as in \autoref{fig:spiked-hat}, can only be done by cutting along a tree that is not a path. Therefore, no unfolding exists.
\end{proof}

The following construction is straightforward, and will allow us to construct polyhedra with multiple hats while keeping the hats disjoint from each other. 

\begin{lemma}
The hat with acute triangles described above can be realized in three-dimensional space,
lying within a right equilateral-triangle prism whose base is the boundary of the hat.
\end{lemma}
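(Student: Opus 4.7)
The plan is to exploit the three-fold rotational symmetry of the hat and realize it symmetrically around a vertical axis, with the boundary lying in the plane $z=0$. First I would place $v_1, v_2, v_3$ at the corners of an equilateral triangle in that plane, at some common distance $r_v$ from the axis; this triangle becomes the base of the prism. Because every ``equal side'' of a brim, band, or crown triangle is identified (across a shared edge) with an equal side of some other triangle, all such legs have a single common length $L$, determined by the boundary edge length and the brim apex angle $85^\circ$. Similarly each band base is identified with an adjacent crown base, giving one common length $2L\sin(5^\circ)$ for all six of them.

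Next I would place the three negatively curved vertices $u_1, u_2, u_3$ at a common height $z=h_u$ and common horizontal radius $r_u$, with their orbit rotated by $60^\circ$ from that of the $v_i$, so that $u_i$ projects toward the midpoint of the edge $v_i v_{i+1}$. Then $|v_i u_i|=|v_{i+1} u_i|$ holds automatically by symmetry, and the equations $|u_{i-1}u_i|=2L\sin(5^\circ)$ and $|v_i u_i|=L$ determine $r_u$ and $h_u$ (we take $h_u>0$). Finally I would place $c$ on the axis at height $h_c$ satisfying $(h_c-h_u)^2=L^2-r_u^2$, with $h_c>h_u$, so that the hat opens upward. Because each face is a triangle whose three vertices now satisfy the correct three side-length constraints, each face is automatically the prescribed isosceles triangle with the prescribed angles.

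It remains to check that $h_u$ and $h_c$ are real and that the hat lies within the prism. Reality reduces to the two inequalities $L^2+r_vr_u>r_v^2+r_u^2$ and $L^2>r_u^2$; since the band and crown triangles have a very sharp $10^\circ$ apex, $r_u$ turns out to be much smaller than both $r_v$ and $L$, so both inequalities hold with considerable slack. For containment, every vertex has $z\ge 0$ and projects orthogonally into the base triangle: the $v_i$ are its corners, $c$ projects to the centroid, and each $u_i$ projects to a point at distance $r_u$ from the centroid, well inside the triangle's inradius $r_v/2$. Since a right prism over a convex base is convex, each triangular face --- being the convex hull of three vertices lying in the prism --- is itself contained in the prism. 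The main work is the numerical consistency check above, but with so much slack I expect no real obstacle; the rest is symmetry and convexity.
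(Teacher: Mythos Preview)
The paper does not actually prove this lemma; it merely introduces it with the remark that ``the following construction is straightforward.'' So there is no paper proof to compare against, and your outline is precisely the natural way to supply that straightforward construction. Your symmetric placement is correct, the two degrees of freedom $(r_u,h_u)$ are indeed determined by $|u_{i-1}u_i|=2L\sin 5^\circ$ and $|v_iu_i|=L$, the height $h_c$ is then forced by $|cu_i|=L$, and your numerical estimates ($r_u\approx 0.10L$ versus inradius $r_v/2\approx 0.39L$) show the reality and containment inequalities hold with ample slack. The convexity argument for containment of the faces in the prism is fine.

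One small point you might add a sentence about: ``realized in three-dimensional space'' implicitly asks for an embedded surface, and you do not explicitly verify that the nine triangles are pairwise non-crossing. This follows immediately from your setup, since the vertical projections of the nine faces onto the base plane form a planar triangulation of the base triangle (the three crown projections tile the small inner triangle $u'_1u'_2u'_3$, and the brim and band projections alternate around the annular region), so the surface is the graph of a piecewise-linear height function over the base and hence embedded. Given that the paper treats the entire lemma as routine, your level of detail is already more than it provides.
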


\section{Acute Ununfoldable Polyhedra}

We now use these hats to construct a topologically convex ununfoldable polyhedron.

\begin{figure}[t]
\includegraphics[width=\columnwidth]{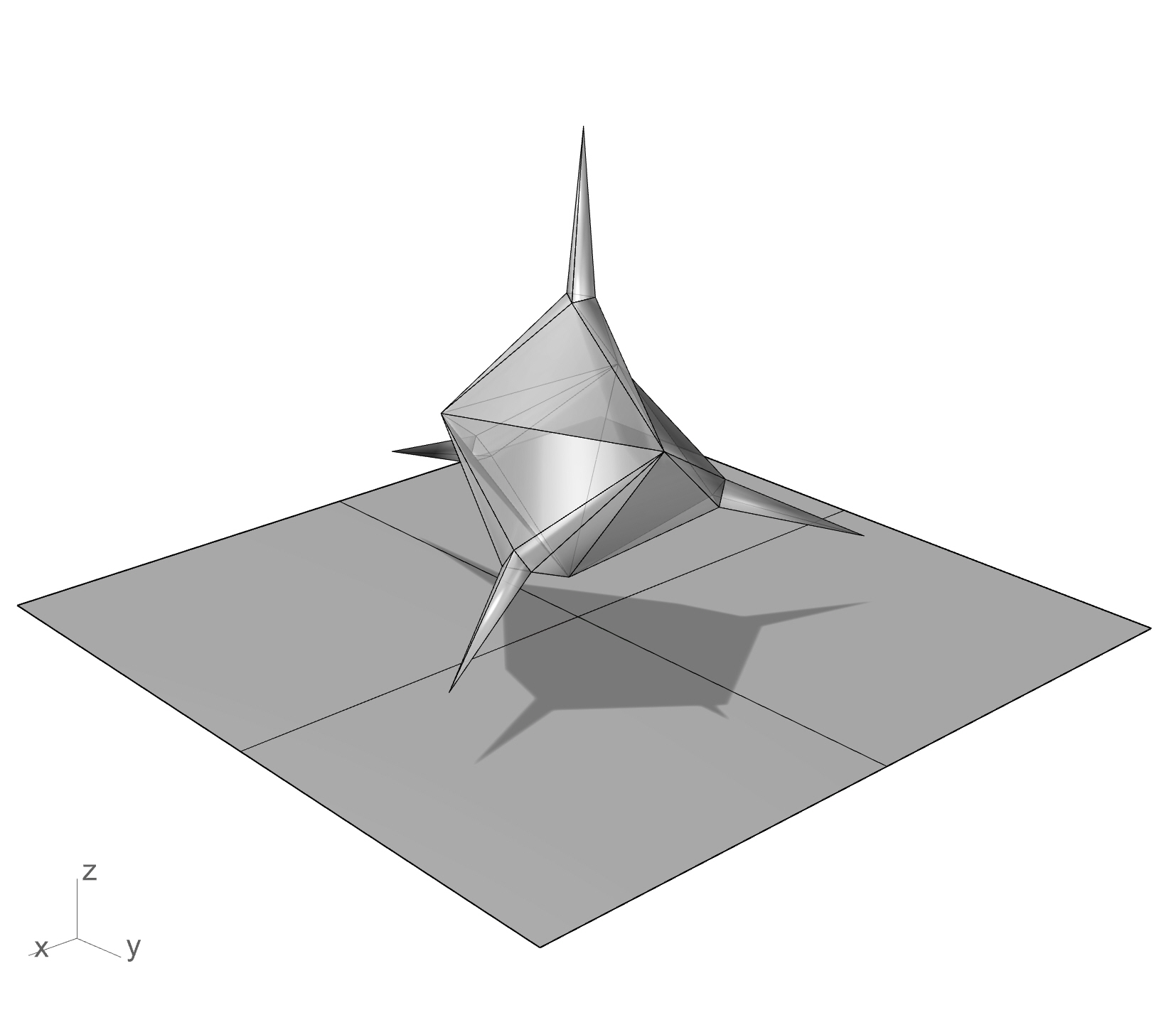}
\caption{Tetrahedron with faces replaced by hats}
\label{fig:caltrop}
\end{figure}

\begin{theorem}
\label{thm:ununf}
There exists a topologically convex ununfoldable polyhedron whose faces are all acute isosceles triangles.
\end{theorem}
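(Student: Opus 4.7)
My plan is to form $P$ by starting from a regular tetrahedron and replacing each of its four faces with a disjoint copy of the acute-isosceles hat from \autoref{lem:hat}, using the preceding lemma to place each hat inside the outward-pointing right equilateral-triangle prism erected over its face. Scaling the underlying tetrahedron up relative to the hats (equivalently, scaling the hats down) makes the four outward prisms pairwise interior-disjoint, so the four hats fit without interference and their union is a polyhedral surface homeomorphic to a sphere whose faces are all acute isosceles triangles.

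To see that $P$ is topologically convex, I would realize the same combinatorial structure by an auxiliary polyhedron in which each face of a tetrahedron is replaced by a very shallow convex cap with the $9$-triangle hat combinatorics of \autoref{fig:hat}: placing the four new interior vertices per face at small outward normal displacements and choosing the displacements small enough that each cap, hence the entire polyhedron, remains strictly convex yields a convex polyhedron with the same $1$-skeleton as $P$. By Steinitz's theorem this certifies that the $1$-skeleton of $P$ is a $3$-connected planar graph.

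The substantive step is to prove that $P$ has no single-piece edge unfolding. Suppose for contradiction such an unfolding exists; then its cut edges form a spanning tree on the vertex set of $P$. For each hat $H$ let $s_H$ denote the number of connected pieces into which the cut edges interior to $H$ separate $H$. By \autoref{lem:tree-1-boundary}, either the cuts interior to $H$ form a forest with at most one boundary vertex per tree (and $s_H=1$), or they separate $H$ into $s_H\ge 2$ pieces. If $s_H=1$ for some $H$, then $H$ appears as a single piece of the unfolding, so by \autoref{lem:hat-path} the cuts within $H$ form a single path from a boundary vertex to the central vertex of $H$, and then \autoref{lem:hat} shows that this layout of $H$ overlaps itself, contradicting the nonoverlap assumption. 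Otherwise every $s_H\ge 2$, so $S:=\sum_H s_H\ge 8$. The interior edges of one hat belong to no other hat, so two sub-regions lying in different hats can only be merged in the unfolding by an uncut edge that lies on the boundaries of both hats, i.e., by an uncut edge among the six edges of the original tetrahedron. At most six such connections are available, whereas joining $S\ge 8$ sub-regions into one planar piece requires at least $S-1\ge 7$ of them, a contradiction.

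I expect the hardest part of the write-up to be the second case of the ununfoldability argument: one must carefully verify that each uncut tetrahedral edge contributes exactly one connection between a sub-region of one incident hat and a sub-region of the other, and that there is no other mechanism by which sub-regions of different hats can be merged in the unfolding. The counting mismatch $S-1\ge 7 > 6$ is exactly why the tetrahedral base works; the remaining content of the proof is then a matter of invoking the hat-specific lemmas already established.
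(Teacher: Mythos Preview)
Your construction is identical to the paper's, and your argument is correct. The difference lies in how you finish the ununfoldability step. The paper argues directly: since by \autoref{lem:hat} no hat can be unfolded in one piece, the cuts inside every hat must contain a path between two of its three boundary (tetrahedron) vertices; these four cut paths are edge-disjoint and give a (multi)graph with four edges on the four tetrahedron vertices, which necessarily contains a cycle, and that cycle separates the surface. You instead split into Case~1 (some $s_H=1$, where \autoref{lem:hat} gives overlap immediately) and Case~2, where you run the dual count---at least $8$ hat sub-regions but only $6$ tetrahedral edges available to connect them, too few to make a single piece. Both routes are valid and essentially Euler-dual to one another; the paper's cycle argument is a bit slicker because it avoids the case split and the bookkeeping you correctly flag as the delicate point (that each uncut tetrahedral edge contributes exactly one merger between sub-regions of the two incident hats). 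Your explicit justification of topological convexity via an auxiliary shallow-cap convex realization and Steinitz's theorem is more careful than the paper, which leaves that point implicit.
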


\begin{proof}
Replace the four faces of a regular tetrahedron by acute-triangle hats, all pointing outward, as shown in \autoref{fig:caltrop}.
Because each lies within a prism having the tetrahedron face as a base,
they do not overlap each other in space.
By \autoref{lem:hat}, no hat can be unfolded into a single piece, so any possible unfolding (even one into multiple pieces) must cut each hat along some path between two of its three boundary vertices (at least; there may be more cuts besides these).
The four paths formed in this way are disjoint except at their ends,
and connect the four vertices of the tetrahedron,
necessarily forming at least one cycle
that separates the tetrahedron into at least two pieces.
\end{proof}

Like the examples of Tarasov, Gr\"unbaum, and Bern et al.~\cite{Tar-UMN-99,Gruenbaum-2002-net,BerDemEpp-CGTA-03}, the resulting polyhedron is also star-shaped, with the center of the tetrahedron in its kernel.

\section{Stacked Ununfoldable Polyhedra}
\label{sec:stacked}

A \emph{stacked polyhedron} is a polyhedron that can be formed by repeatedly gluing a tetrahedron onto a single triangular face of a simpler stacked polyhedron, starting from a single tetrahedron~\cite{Gru-Geomb-01b}. To make ununfoldable stacked polyhedra, we use a similar strategy to our construction of ununfoldable polyhedra with acute-triangle-faces, in which we replace some faces of a convex polyhedron by hats. However, the acute-triangle hat that we used earlier cannot be used as part of a stacked polyhedron: in a stacked polyhedron, every non-face triangle is subdivided into three smaller triangles, but that is not true of the outer triangle of \autoref{fig:hat}. Instead, we use the hat shown in \autoref{fig:stacked-hat}.
As before, it has three brim triangles, three band triangles, and three crown triangles, but they are arranged differently and less symmetrically.  We make the brim and band triangles nearly coplanar, with shapes approximating those shown in the figure, but projecting slightly out of the figure so that the result can be constructed as a stacked polyhedron. We choose the crown triangles to be isosceles, and taller than the isosceles triangles inscribed in regular pentagons, as in our acute-triangle construction, so that (as viewed in  \autoref{fig:stacked-hat}) they project out of the figure.

\begin{figure}[t]
\includegraphics[width=\columnwidth]{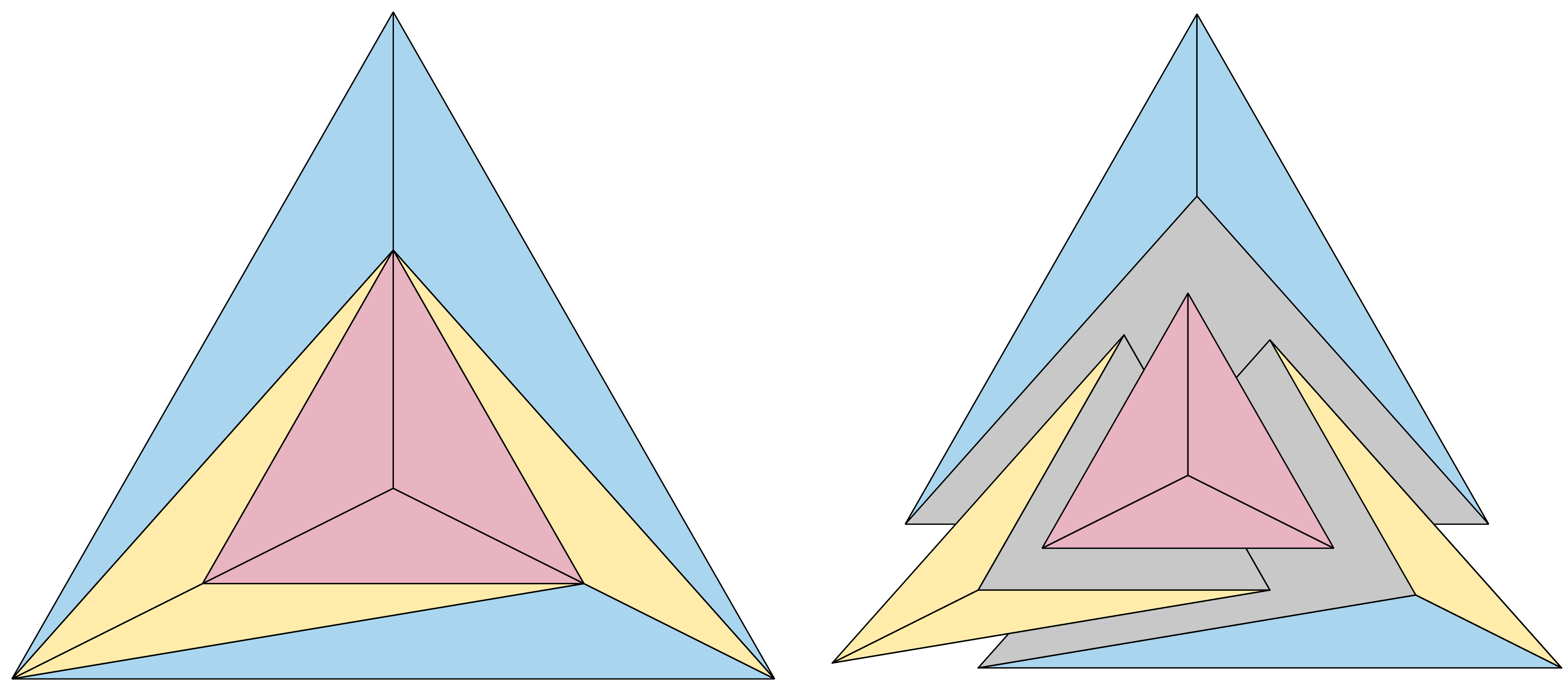}
\caption{Hat for stacked polyhedra (top view, left, and exploded view as a stacked polyhedron, right)}
\label{fig:stacked-hat}
\end{figure}

\begin{lemma}
The hat described above has no single-piece unfolding.
\end{lemma}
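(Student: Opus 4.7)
The plan is to mirror the structure of the proof of \autoref{lem:hat}, since the stacked hat has the same combinatorial structure and the same sign pattern of vertex curvatures (negative at the three non-central interior vertices, positive at the center). Consequently \autoref{lem:cut-2-neg}, \autoref{lem:tree-1-boundary}, and \autoref{lem:hat-path} apply verbatim, and any purported single-piece unfolding of the hat must cut along one of the two combinatorial types of boundary-to-center paths shown in \autoref{fig:hatpaths}. In particular, regardless of which path is chosen, the brim and band triangles are laid out in the plane in essentially the arrangement depicted in the top view of \autoref{fig:stacked-hat}.

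Next I would extract the crucial planar datum, namely the interior angles of the polygonal chain formed by the three band-triangle base edges after unfolding. Because the hat is designed so that the brim and band triangles are nearly coplanar and approximate the shapes shown in \autoref{fig:stacked-hat}, it suffices to compute these interior angles directly from the angle labels of the top view (with a negligible perturbation accounting for the slight out-of-plane tilt) and verify that each is strictly less than $108^\circ$, the interior angle of a regular pentagon. Given this bound, the pentagon midline argument of \autoref{fig:midline} applies unchanged: the band triangle at either end of the chain crosses the perpendicular bisector of the base edge at the opposite end, and does so closer to that edge than the corresponding pentagon vertex would.

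With this geometric setup in hand, I would finish exactly as in \autoref{lem:hat}. The two admissible cut paths of \autoref{fig:hatpaths} force any crown triangle attached during unfolding to be hinged on one of the two \emph{end} base edges of the chain (attachment to the middle base edge occurs only in tree-but-not-path cuttings, which \autoref{lem:hat-path} rules out). Because the crown triangles are chosen isosceles and taller than the isosceles triangle inscribed in a regular pentagon with the same base, the apex of the attached crown triangle protrudes beyond the perpendicular bisector farther than the pentagon vertex, hence into the interior of the band triangle at the opposite end of the chain. This forced overlap contradicts the assumption of a single-piece unfolding.

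The main obstacle is justifying that the less symmetric geometry of the stacked hat still produces the uniform sub-$108^\circ$ bound on the chain's interior angles used in \autoref{lem:hat}. In the acute isosceles hat all three interior angles of the chain were equal to $105^\circ$ by symmetry; in the stacked hat the brim and band triangles do not all share shapes, and the arrangement is explicitly described as less symmetric, so I would need to identify precisely which angles of the top view feed into each interior angle of the chain and verify the bound case by case (including the small correction introduced by making the brim and band project slightly out of the plane to realize the stacked construction). Once that verification is in place, the remainder of the argument is essentially identical to the acute case.
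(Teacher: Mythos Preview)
Your proposal rests on a false premise: the stacked hat does \emph{not} have the same combinatorial structure as the acute hat of \autoref{fig:hat}. In a stacked polyhedron every non-face triangle must be subdivided into three smaller triangles meeting at an interior point, so the brim, band, and crown triangles are ``arranged differently and less symmetrically'' (as the paper states and \autoref{fig:stacked-hat} depicts). Two consequences follow. First, \autoref{lem:hat-path} as stated is for hats combinatorially equivalent to \autoref{fig:hat}; its \emph{conclusion} (the cuts form a single boundary-to-center path) still carries over because the proof uses only the curvature signs, but you should say so rather than invoke the lemma verbatim. Second, and more seriously, your claim that the cut path must be one of the two combinatorial types in \autoref{fig:hatpaths} is simply wrong for this hat: the paper explicitly notes ``there are many more cases than there were in \autoref{fig:hatpaths}.'' Because the hat lacks the threefold symmetry, the boundary-to-center paths are not reduced to two classes, and your appeal to \autoref{fig:hatpaths} to pin down where the crown attaches does not go through.

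The paper handles this by \emph{avoiding} the case split you propose. Rather than enumerate path types, it argues uniformly: whatever boundary-to-center path is cut, the three band--crown edges still form a polygonal chain whose interior angles are close to $60^\circ$ (far below $108^\circ$), because the brim and band are nearly coplanar with the top view in \autoref{fig:stacked-hat}. Any such path cuts two of the three band--crown edges, so the crown is attached along one of the two end edges of that chain; the pentagon-midline overlap argument then finishes exactly as you describe. Your last paragraph correctly flags the angle verification as the main work, but the actual obstacle you missed is upstream of that: you must first replace the ``only two path types'' step with an argument that is insensitive to which of the many possible paths is chosen.
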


\begin{proof}
As with our other hat, the center vertex of this hat has positive curvature, and the other three interior vertices have negative curvature, so by \autoref{lem:hat-path} any unfolding of the hat that leaves it in one piece must form a path consisting of a single edge cutting from the boundary to the crown, two edges cutting between the band and the crown, and one edge cutting to the center of the crown.

There are many more cases than there were in \autoref{fig:hatpaths}, but we can avoid case-based reasoning by arguing that in each case, the brim and band triangles unfold in such a way that the three edges between the band and crown triangles form a polygonal chain with interior angles less than the $108^\circ$ angles of the regular pentagon (in fact,  close to $60^\circ$, because of the way we have constructed this part of the hat to differ only by a small amount from the top view shown in \autoref{fig:stacked-hat}. Therefore, just as in \autoref{fig:midline}, each edge at one end of this chain of three edges overlaps the perpendicular bisector of the edge at the other end of the chain.

Cutting along a path from a boundary edge of the hat to its center vertex forces the three crown triangles to be attached to the unfolded brim and band triangles on one of the two edges at the end of this path. However, our construction makes the three crown triangles tall enough to ensure that, no matter which of these two edges they are attached to, they will overlap the edge at the other end of the path at the point where it crosses the perpendicular bisector.\end{proof}

\begin{theorem} \label{thm:stacked-ununf}
There exists an ununfoldable stacked polyhedron.
\end{theorem}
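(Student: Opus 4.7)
The plan is to mimic the proof of \autoref{thm:ununf}, substituting the new stacked hat for the acute-triangle hat used there. Start from a regular tetrahedron and, on each of its four faces, build a stacked hat as a subtree of tetrahedron-gluings projecting slightly outward. With the outward projections and the brim/band perturbations taken small enough, each hat sits in a thin neighborhood above its supporting face of the convex base, so the four hats are pairwise geometrically disjoint; their union is realized as an embedded stacked polyhedron, namely the root tetrahedron followed by four face-to-face subtrees of added tetrahedra, one per face.

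Suppose for contradiction that this polyhedron admits a non-overlapping single-piece edge unfolding. For each of the four hats $H$, view $H$ as a topological disk in the surface and apply \autoref{lem:tree-1-boundary} to the cuts lying within $H$. Three possibilities arise: (i) an internal cycle of cuts in $H$, which by the Jordan Curve Theorem separates the surface sphere into two pieces and immediately contradicts single-piecedness; (ii) a path of cuts between two boundary vertices of $H$; or (iii) a forest of cuts with at most one boundary vertex per tree. In case (iii) the cuts do not disconnect $H$, so restricting the unfolding to $H$ gives a valid single-piece, non-overlapping unfolding of the hat in isolation, which the preceding lemma forbids. Hence every hat falls into case (ii).

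Each boundary-to-boundary path connects two of the three vertices of the triangular face carrying its hat, and those three vertices are three of the four vertices of the underlying tetrahedron. The four paths thus form a multigraph on four vertices with four edges. Since every forest on four vertices has at most three edges, the multigraph contains a cycle. This cycle is a simple closed curve of cut edges on the polyhedron's surface (a topological sphere), so by the Jordan Curve Theorem it separates the surface into at least two pieces, contradicting the single-piece assumption.

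The main point requiring care is case (iii): one must verify that if no boundary edge of $H$ is cut and the interior cuts form a forest, then the faces of $H$ lie in a single connected component of the cut-open surface and are laid out in the plane by a rigid motion determined entirely by the interior cut pattern, so the overlap obstruction established in the preceding lemma applies verbatim. Once this is noted, the rest of the argument is purely combinatorial and parallels the proof of \autoref{thm:ununf}.
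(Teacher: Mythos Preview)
Your proof is correct and follows essentially the same approach as the paper: replace each face of a regular tetrahedron by an outward stacked hat (realized as a chain of tetrahedra so the result is stacked), observe that each hat must be cut along a boundary-to-boundary path because it has no single-piece unfolding, and then note that four such paths among the four tetrahedron vertices force a cycle and hence a separation. Your case analysis is in fact more careful than the paper's terse version---you explicitly dispose of the internal-cycle case and spell out why case~(iii) contradicts the hat lemma via restriction---but the overall argument is the same.
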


\begin{proof}
We replace the four faces of a regular tetrahedron with the hat described above.
Each such replacement can be realized as a stacking of four tetrahedra onto the face, so the result is a stacked polyhedron.
As in \autoref{thm:ununf}, each hat lies within a prism having the
tetrahedron face as a base, so they do not overlap each other in space;
and the set of edges cut in any unfolding must include at least four paths
between the four tetrahedron vertices, necessarily forming a cycle that cuts
one part of the polyhedron surface from the rest.
\end{proof}

A stacked hat with the same combinatorial structure as the one used in this construction, with the center vertex positively curved and the surrounding three vertices negatively curved, cannot be formed from acute triangles, because that would leave the degree-four vertex with positive curvature. We leave as an open question whether it is possible for an ununfoldable stacked polyhedron to have all faces acute.

\section{Very Ununfoldable Polyhedra}

Both families of examples above
can be made into very ununfoldable families.
In both cases, the approach is the same: instead of starting from a
tetrahedron, we start from a polyhedron with many triangular faces,
and show that attaching hats to more and more triangles requires more and
more unfolded pieces.

\begin{theorem} \label{thm:many}
There exist topologically convex polyhedra with acute isosceles triangle faces
such that any unfolding formed by cutting along edges into multiple
non-self-overlapping pieces requires an unbounded number of pieces.
\end{theorem}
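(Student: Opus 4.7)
The plan is to generalize the proof of \autoref{thm:ununf} by replacing the tetrahedron base with a topologically convex polyhedron $P_0$ with many triangular faces, and then to extract a linear lower bound on the number of unfolding pieces by an Euler-characteristic count.

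Concretely, I would take $P_0$ to be a convex simplicial polytope with $F_0$ triangular faces and (by Euler's formula together with $2E_0 = 3F_0$) $V_0 = F_0/2 + 2$ vertices, realized geometrically so that every face can support an acute-isosceles hat erected in a thin outward prism as in \autoref{thm:ununf}. The cleanest way to arrange this is to scale $P_0$ to be very large so that each face is nearly planar inside its prism, and to draw from a bounded menu of acute-isosceles hat variants (all still satisfying \autoref{lem:hat}) so that some variant fits over each triangular face of $P_0$. Replacing each face of $P_0$ with an outward-pointing hat then yields a topologically convex polyhedron all of whose faces are acute isosceles triangles, with the hats pairwise nonoverlapping by the prism argument of \autoref{thm:ununf}.

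For the counting, fix any non-self-overlapping multipiece edge unfolding and examine each hat $H$. By \autoref{lem:hat}, $H$ cannot unfold as a single piece, so \autoref{lem:tree-1-boundary} forces the cut edges inside $H$ either to contain a cycle in the interior of $H$, bounding an ``island'' piece strictly inside $H$ (type (a)), or to contain a path joining two distinct boundary vertices of $H$ (type (b)). Let $n_a$ and $n_b$ be the numbers of hats of each type; then $n_a + n_b \geq F_0$. Build a multigraph $G$ on $V(P_0)$, drawn on the topological sphere $P_0$, by adding one edge per type-(b) hat joining the two boundary vertices linked by its cut path; then $E(G) \geq n_b$ and $V(G) \leq V_0$, and by Euler's formula $G$ partitions the sphere into at least $2 + E(G) - V_0 \geq 2 + n_b - V_0$ faces (with the edgeless case leaving a single face). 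Every piece of the unfolding is contained in a single face of $G$ because pieces cannot cross cuts, and every type-(a) hat contributes a further island piece strictly inside its face, so
\[
\#\{\text{pieces}\} \;\geq\; (2 + n_b - V_0) + n_a \;\geq\; 2 + F_0 - V_0 \;=\; \tfrac{F_0}{2},
\]
which is unbounded as $F_0 \to \infty$.

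The combinatorial counting is routine once Section~2's hat lemmas are in hand; the substantive difficulty is the geometric realization step, that is, exhibiting for every $F_0$ a convex polyhedron with $F_0$ triangular faces on which an acute-isosceles hat can be erected over each face without mutual collisions and while respecting the angular constraints of \autoref{lem:hat}. Rescaling $P_0$ and using a bounded family of hat variants should reduce this to the single-hat verification already done for \autoref{thm:ununf}, but the details of maintaining topological convexity of the result and fitting each face from a finite menu of hat shapes require care.
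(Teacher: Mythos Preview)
Your Euler-characteristic count is correct and essentially the same as the paper's: the paper processes the boundary-to-boundary cut paths one by one, observing that at most $V_0-1$ of them can connect previously unconnected base vertices while each remaining one separates off a new piece, which is exactly the spanning-tree reformulation of your bound $F\ge 2+E-V$ on the embedded multigraph. The treatment of hats that contain an interior cut cycle (your type~(a)) is likewise the same.

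The genuine gap is the one you flag yourself, and the paper closes it by a specific device you do not propose. Your plan of placing hats on an arbitrary convex simplicial polytope using ``a bounded menu of acute-isosceles hat variants'' does not work as stated: as $F_0\to\infty$ the face shapes of a strictly convex simplicial polytope cannot be covered by finitely many similarity classes, rescaling $P_0$ does nothing to the face shapes, and the hat of Section~2 has an equilateral boundary together with a proof of \autoref{lem:hat} that relies on the particular $105^\circ$ band-chain angle, so each new boundary shape would require a fresh construction and a fresh ununfoldability argument. The paper avoids all of this by taking the base not to be a strictly convex polytope at all, but the surface of a regular tetrahedron with each face subdivided into a $k\times k$ grid of congruent \emph{equilateral} triangles ($4k^2$ faces, $2k^2+2$ vertices). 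Every face is then equilateral, the single hat already built and verified in Section~2 fits on every face, and the prisms over the small triangles lying on a common tetrahedron face are parallel and hence pairwise disjoint. Only the final hatted surface needs to be topologically convex, so the flat subdivision vertices of the base cause no difficulty.
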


\begin{proof}
For any integer $k \geq 1$, refine the regular tetrahedron by subdividing each
edge into $k$ equal-length edges and subdivide each face into a regular grid of
$\sum_{i=1}^k (2 i - 1) = k^2$ equilateral triangles of side length $1/k$,
for a total of $4 k^2$ faces and (by inclusion-exclusion)
$\sum_{i=1}^{k+1} i - 6 (k+1) + 4 = 2 k^2 + 2$
vertices.
%
%
Replace each equilateral triangular face
by an acute-triangle hat pointing outward.
As in \autoref{thm:ununf}, each hat lies within a prism having the face of the
tetrahedron as a base, so they do not overlap each other in space;
and any unfolding into multiple pieces must, in each hat, either cut along a cycle within the hat or cut along some path
connecting two of its three boundary vertices. Let $c$ be the number of cycles within hats cut in this way, so that
we have a system of at least $4 k^2+c$ disjoint paths connecting pairs of
subdivided-tetrahedron vertices.

Now consider cutting the polyhedron surface along these paths, one by one.
Each cut either connects two subdivided-tetrahedron vertices that were not
previously connected along the system of cuts, or two subdivided-tetrahedron
vertices that were previously connected. If cutting along a path connects two
vertices that were not previously connected, it reduces the number of
connected components among these vertices; this case can happen at most
$2 k^2 + 1$ times.
If cutting along a path connects two vertices that were previously connected, then that
path and the path through which they were previously connected form a Jordan curve that separates off two parts of the
surface from each other. Because there are $4k^2-c$ paths connecting pairs of subdivided-tetrahedron vertices, only $2 k^2 + 1$ of which can form new connections, this case must happen at least $2 k^2 - 1-c$ times.
Because the surface started with a single piece and undergoes at least
$2 k^2 - 1-c$ separations, it ends up with at least $2 k^2-c$ pieces, which together with the $c$ additional pieces formed by cycles within hats, form a total of at least $2k^2$ pieces.
\end{proof}

\begin{theorem}
There exist topologically convex stacked polyhedra such that
any unfolding formed by cutting along edges
into multiple non-self-overlapping pieces
requires an unbounded number of pieces.
\end{theorem}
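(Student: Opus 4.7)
My plan is to imitate the proof of \autoref{thm:many}, substituting the stacked hat of \autoref{sec:stacked} for the acute-triangle hat, and a highly-stacked base polyhedron for the subdivided tetrahedron. First I would build the base polyhedron: starting from a single tetrahedron and performing $n$ successive face stackings (each of which glues a new tetrahedron onto a triangular face, adding one vertex and two faces), I obtain a topologically convex stacked polyhedron with $V = n+4$ vertices and $F = 2n+4$ triangular faces. By choosing the heights of the stacked tetrahedra to decrease fast enough, I can keep the resulting polyhedron convex with every face accessible from outside by a thin prism.

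Next, I would replace each of the $F$ triangular faces by a stacked hat, pointing outward. Since each such hat is itself produced by stacking four tetrahedra onto a triangular base, performing the replacement on every face keeps the overall polyhedron stacked, with a total of $n+4F$ stackings applied to the initial tetrahedron. By making the crown protrusion of each hat small enough relative to its base triangle, each hat fits inside a thin prism erected on that triangle and so no two hats interfere with each other in three-space.

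The counting argument then runs exactly as in \autoref{thm:many}. Within each hat, \autoref{lem:tree-1-boundary} together with the stacked-hat analogue of \autoref{lem:hat} forces any multi-piece unfolding to include either an interior cycle of cuts or a path of cuts between two boundary vertices of that hat. Letting $c$ be the number of hats cut by an interior cycle, the remaining $F-c$ hats each contribute at least one boundary-to-boundary path whose endpoints are vertices of the base polyhedron. Of these, at most $V-1$ can merge previously disconnected components; each of the other $(F-c)-(V-1)$ paths instead separates off one new piece by the Jordan curve argument, and each cycle likewise contributes one more piece. Summing gives at least $F-V+2 = n+2$ pieces, which is unbounded as $n$ grows.

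The main obstacle is the joint geometric realization. One must verify that the base stacked polyhedron can be built with shallow enough stackings so that its dihedral angles leave every face accessible from outside, and that the stacked hat on each face can simultaneously be made thin enough to avoid the hats on adjacent faces. Because the stacked hat tends to a nearly flat, shallow configuration as its crown height shrinks to zero, and because each hat is contained in a prism over its own base triangle, a continuity/scaling argument ensures that sufficiently small crown heights suffice; but this is the step that deserves the most care, since it is where the proof relies on the specific geometric flexibility of the stacked-hat construction rather than on the purely combinatorial cut-counting scheme.
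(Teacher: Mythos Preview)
Your proposal is correct and follows essentially the same approach as the paper: build a stacked base polyhedron by iterated shallow stackings on a tetrahedron (the paper also places each new apex near the in-center of the chosen face), replace every face by the stacked hat of \autoref{sec:stacked}, and then run the same Euler-type counting argument as in \autoref{thm:many} to obtain at least $F-V+2=n+2$ pieces. Your version is in fact slightly more careful than the paper's, since you retain the cycle count~$c$ from \autoref{thm:many} and you flag the geometric realization as the step needing attention; the paper handles the latter simply by invoking the flexibility of the stacked-hat construction to place a small equilateral band--crown interface near each in-center.
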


\begin{proof}
For any integer $k \geq 0$, refine the regular tetrahedron by choosing any face
and attaching to the face a very shallow tetrahedron whose apex is near the
in-center of the face, effectively splitting the face into three faces,
and repeating this process a total of $k$ times.
Because each attachment increases the number of faces by $2$ and the number
of vertices by $1$, the result is a stacked polyhedron with $4 + 2 k$
triangular faces (not necessarily equilateral) and $4 + k$ vertices.
Replace each triangle with a version of the hat from \autoref{sec:stacked}
pointed outward, using the availability flexibility to make the interface
between the band and crown an equilateral triangle
near the in-center of the original triangle.
As in \autoref{thm:stacked-ununf}, the result is a stacked polyhedron;
each hat lies within a prism having the face of the
tetrahedron as a base, so they do not overlap each other in space;
and any unfolding into multiple pieces must cut each hat
along some path connecting two of its three boundary vertices.
As in \autoref{thm:many}, at most $3 + k$ such paths can decrease the number
of connected components among the $4 + k$ vertices, leaving at least $1 + k$
paths that separate the surface into at least $2 + k$ pieces.
\end{proof}

\section*{Acknowledgments}

This research was initiated during the Virtual Workshop on Computational
Geometry organized by E. Demaine on March 20--27, 2020.
We thank the other participants of that workshop for helpful discussions
and providing an inspiring atmosphere.

\balance
\raggedright
\bibliographystyle{plainurl}
\bibliography{unfold}

\end{document}